\newtheorem{thm}{Theorem}
\newcommand{\bx} {\boldsymbol{x}}
\newcommand{\bA} {\boldsymbol{A}}
\newcommand{\bB} {\boldsymbol{B}}
\newcommand{\by} {\boldsymbol{y}}
\newcommand{\bW} {\boldsymbol{W}}
\newcommand{\bw} {\boldsymbol{w}}
\newcommand{\bh} {\boldsymbol{h}}
\newcommand{\bH} {\boldsymbol{H}}
\newcommand{\bxi} {\boldsymbol{\xi}}
\newcommand{\bI} {\boldsymbol{I}}
\newcommand{\bR} {\boldsymbol{R}}
\newcommand{\bU} {\boldsymbol{U}}
\newcommand{\bu} {\boldsymbol{u}}
\newcommand{\bLam} {\boldsymbol{\Lambda}}
\newcommand{\gl}{\lambda}
\def\ba#1\ea{\begin{align}#1\end{align}}
\newcommand{\bp} {\begin{proof}}
\newcommand{\ep} {\end{proof}}
\newcommand{{\bRF}} {\right\}}
\begin{document}

\title{On the Capacity of Gaussian MIMO Channels with Memory}

\author{Sergey Loyka,  Charalambos D. Charalambous

\vspace*{-1.\baselineskip}



\thanks{S. Loyka is with the School of Electrical Engineering and Computer Science, University of Ottawa, Ontario, Canada, e-mail: sergey.loyka@ieee.org}

\thanks{C.D. Charalambous is with the ECE Department, University of Cyprus, Nicosia, Cyprus, e-mail: chadcha@ucy.ac.cy}}

\maketitle


\begin{abstract}
The operational capacity of Gaussian MIMO channels with memory was obtained by Brandenburg and Wyner in [9] under certain mild assumptions on the channel impulse response and its noise covariance matrix, which essentuially require channel memory to be not too strong. This channel was also considered by Tsybakov in [10] and its information capacity was obtained in some cases. It was further conjectured, based on numerical evidence, that these capacities are the same in all cases. This conjecture is proved here. An explicit closed-form expression for the optimal input power spectral density matrix is also given. The obtained result is further extended to the case of joint constraints, including per-antenna and interference power constraints as well as energy harvesting constraints. These results imply the information-theoretic optimality of OFDM-type transmission systems for such channels with memory.
\end{abstract}

\vspace*{-0.3\baselineskip}
\begin{IEEEkeywords}
Channel capacity, MIMO, memory, OFDM.
\end{IEEEkeywords}

\vspace*{-0.5\baselineskip}
\section{Introduction}

Multi-antenna (MIMO) systems have attracted unprecedented attention in both academia and industry over the last two decades due to their large spectral efficiency and other capabilities \cite{Tse-05}-\cite{Marzetta-16}. They are now extensively used in modern cellular and WiFi networks. The capacity of AWGN MIMO channels was established in \cite{Tsybakov-65}\cite{Telatar-95} and was further extended in several directions, see e.g., \cite{Vu-11}-\cite{Loyka-20} and references therein.

While AWGN channels are memoryless, many channels in modern systems do have memory, e.g., wideband or multi-user channels, either due to the channel impulse response (in e.g., multipath channels with delay spread) or due to noise with memory (where noise also represents multi-user interference). In those cases, the memoryless results in \cite{Tsybakov-65}-\cite{Loyka-20}   do not apply. The Gaussian MIMO channel with memory was considered by Brandenburg and Wyner in \cite{Brandenburg-74} under certain (mild) assumptions on the channel impulse response and its noise covariance matrix, which essentially require the channel memory to be not too strong, see \eqref{eq.ch.c1} and \eqref{eq.ch.c2} below; these assumptions are satisfied by modern 5G channel models as in e.g., \cite{3GPP-22}. Its operational capacity\footnote{Defined as the maximum achievable transmission rate subject to the reliability criterion, i.e., arbitrary low error probability with increasing blocklength; this guarantees the existence of codebooks with as low error probability as desired \cite{Cover-06}.} under the total (average) power constraint (TPC) was established in a closed-form in \cite{Brandenburg-74} by proving direct and converse coding theorems. This channel has been also studied by Tsybakov in \cite{Tsybakov-06} and its information capacity\footnote{Defined as the maximum mutual information rate subject to the input power constraint \cite{Cover-06}; recall that operational and information capacities are not necessarily the same.} under the TPC was established in a closed-form for some special cases. Based on numerical evidence, it was further conjectured that these two capacities are always the same for the considered channel. In this Letter, we prove that this is indeed the case.

It should be emphasized that the Tsybakov's conjecture is not trivial: while the operational and information capacities are the same for information-stable channels\footnote{Recall that a channel is information stable if its information density converges to its mutual information under optimal input distribution; this is somewhat similar to the notion of ergodicity, whereby time average converges to statistical average.} \cite{Dobrushin-59}, they can be significantly different (the information capacity being larger than the operational capacity) for information-unstable channels, see e.g., \cite{Verdu}\cite{Loyka-16}\cite{Ting}. Since the channel above has memory, it is far from clear whether it is information stable or not: recall that memory can induce non-ergodic channel behaviour and many channels with memory are not information-stable; the simplest example is a non-ergodic fading channel \cite{Biglieri}.

As a by-product of our proof, a closed-form expression for the optimal (capacity-achieving) input power spectral density (PSD) matrix is obtained in the general case for the considered channel under the TPC.

This result is further extended to the case of joint power constraints, which include per-antenna power constraints (PAC) as in \cite{Vu-11}\cite{Loyka-17}, which are due to limited-power per-antenna amplifiers, interference power constraints (IPC) as in \cite{Loyka-20}, which limit the interference power induced to other users, and energy harvesting constraints (EHC) as in \cite{Hoang-20}.

The above results imply the information-theoretic optimality of popular OFDM-type transmission systems \cite{Bolcskei-02}\cite{Hoang-20} for the considered channels since such systems essentially "implement" the frequency-domain channel capacity expressions by replacing integrals with Riemann sums over subcarriers.

\textit{Notations}: lower case ($\bx$) and capital ($\bR$) bold symbols denote vectors and matrices respectively; $\bR^+$ is Hermitian conjugation; $|\bR|$, $|\bR|_F$ and $tr\bR$ are determinant, Frobenius norm  and trace of $\bR$; $(\bR)_+$ retains positive eigenmodes of Hemitian matrix $\bR$:
\ba
(\bR)_+ = \sum_{i:\gl_i(\bR)>0} \gl_i(\bR)\bu_i\bu_i^+
\ea
where $\bu_i$ and $\gl_i(\bR)$ are $i$-th eigenvector and eigenvalue of $\bR$; $\bR\ge 0$ means that $\bR$ is positive semi-definite, $(x)_+=\max(x,0)$.

\section{Channel Model and Its Capacities}
We adopt the channel model and notations from \cite{Brandenburg-74} with slight modifications, which are also consistent with \cite{Tsybakov-06}. The discrete-time channel model is as follows:
\ba
\label{eq.ch}
\by(t)=\sum_{\tau}\bH(t-\tau)\bx(\tau)+\bxi(t)
\ea
where $\by(t),\ \bx(t)$ are the output (received) and input (transmitted) $n$-dimensional vector signals, $\bxi(t)$ is correlated zero-mean Gaussian noise with memory, and $\bH(t)$ is the channel (discrete-time) impulse response $n\times n$ matrix (collecting channel impulse responses from each input to each output); $t,\ \tau = 0, \pm 1, \pm 2,...$ are discrete time variables. In this model, the noise is correlated in time as well as across outputs (receive antennas), and the channel impulse response $\bH(t)$ also introduces memory (due to e.g., multipath propagation and the related delay spread). The noise is assumed to be wide-sense stationary and hence can be characterized by its covariance matrix
\ba
\bR_{\xi}(\tau) = E\{\bxi(t)\bxi^+(t-\tau)\}
\ea
where $E\{\cdot\}$ is statistical expectation, while the channel can be represented by its (discrete) transfer function
\ba
\bH(\theta)= \sum_t \bH(t)e^{-j t \theta}
\ea
where $-\pi \le \theta \le \pi$ is the normalized frequency. Likewise, the (stationary) noise can be characterized by its (discrete) power spectral density matrix:
\ba
\bR_{\xi}(\theta)= \sum_t \bR_{\xi}(t)e^{-j t \theta}
\ea

The following assumptions have been made in \cite{Brandenburg-74}:

1. The channel is causal: $\bH(t)=0,\ t<0$, and satisfies the following conditions:
\ba
\label{eq.ch.c1}
\sum_{t=0}^{\infty} |\bH(t)|_F < \infty,\ |\bH(t)|_F < b/t,\ \forall\ t>0
\ea
for some $b< \infty$. In addition, $|\bH(\theta)| \neq 0$, $-\pi \le \theta \le \pi$, i.e., the channel is non-singular at any frequency (this condition can be relaxed later, since "singular" frequencies do not contribute to the capacity).

2. The noise covariance matrix satisfies the following:
\ba
\label{eq.ch.c2}
\sum_{t=-\infty}^{\infty} |\bR_{\xi}(t)|_F < \infty
\ea
and its PSD is also non-singular, $|\bR_{\xi}(\theta)| \neq 0$.

Note that the above conditions are not too restrictive: they essentially require the channel memory to be not too strong. Any channel with finite impulse response and with finite-memory non-singular noise satisfies them, as in state-of-the art industrial 5G channel models \cite{3GPP-22}. Under these conditions, the operational channel capacity was established in \cite{Brandenburg-74} under the total power constraint as follows:
\ba
\label{eq.C}
C = \frac{1}{4\pi} \sum_i \int_{-\pi}^{\pi} \left(\log \frac{\mu}{\gl_i(\theta)}\right)_+ d\theta
\ea
where $\gl_i(\theta)$ is $i$-th eigenvalue of $\bH^{-1}(\theta)\bR_{\xi}(\theta)(\bH(\theta)^+)^{-1}$ and $\mu>0$ is "water level" determined as a unique solution of the following equation
\ba
\label{eq.PC}
\frac{1}{2\pi} \sum_i \int_{-\pi}^{\pi} \left(\mu - \gl_i(\theta)\right)_+ d\theta = P
\ea
where $P$ is the total input (transmit) power.

On the other hand, the information capacity of this channel was established in Theorem 1 of \cite{Tsybakov-06} for $n=2$ and $\bH(\theta)=\bI$, where $\bI$ is identity matrix, including explicit expressions for the optimal input PSD matrix $\bR_{x}(\theta)$, and some special cases were considered. In the case of general $n$, a lower bound to the information capacity was obtained, and the results were further extended to $\bH(\theta)\neq \bI$.

Since the optimal input is Gaussian, the starting point of the analysis in \cite{Tsybakov-06} is the Pinsker's formula for mutual information rate under Gaussian input \cite{Pinsker-64}:
\ba
I(\bx,\by) = \frac{1}{4\pi} \int_{-\pi}^{\pi} \log \frac{|\bR_y(\theta)|}{|\bR_{\xi}(\theta)|}d\theta
\ea
which is further optimized over all input covariance matrices $\bR_{x}(\theta)$, subject to the total (average) power constraint, to obtain the information capacity $C_{inf}$:
\ba
\label{eq.Cinf}
C_{inf} = \frac{1}{4\pi} \max_{\bR_x(\theta) \in \mathcal{S}} \int_{-\pi}^{\pi} \log \frac{|\bR_y(\theta)|}{|\bR_{\xi}(\theta)|}d\theta
\ea
where $\bR_{y}(\theta)=\bR_{x}(\theta)+\bR_{\xi}(\theta)$, and the constraint set $\mathcal{S}$ represents the TPC,
\ba
\label{eq.S}
\mathcal{S} = \left\{\bR_x(\theta)\ge 0: \frac{1}{2\pi}\int_{-\pi}^{\pi} tr\bR_x(\theta) d\theta \le P \right\}
\ea
It was further conjectured, based on numerical evidence (see p. 192 in \cite{Tsybakov-06}) that the information and operational capacities are the same in all cases, $C=C_{inf}$.

We prove this conjecture in Theorem 1 below  and also obtain an explicit solution to the optimization problem in \eqref{eq.Cinf} for any $n$, which gives the optimal input PSD matrix for this channel and shows that the optimal power allocation is via water-filling (both across inputs (antennas) and frequencies).

\section{Operational Capacity = Information Capacity}

The capacity of the MIMO channel with memory in \eqref{eq.ch} under the TPC can be characterized as follows.

\begin{thm}
The operational capacity of the channel in \eqref{eq.ch} under the conditions in \eqref{eq.ch.c1} and \eqref{eq.ch.c2} is the same as its information capacity, $C=C_{inf}$, under the TPC in \eqref{eq.S}. The optimal input PSD is as follows:
\ba
\label{eq.Rx*}
\bR_x^*(\theta) = (\mu\bI - (\bH^+(\theta)\bR_{\xi}^{-1}(\theta)\bH(\theta))^{-1})_+
\ea
where  $\mu>0$  is found from the power constraint in \eqref{eq.PC}.
\end{thm}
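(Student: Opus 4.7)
The plan is to solve the information-capacity optimization in \eqref{eq.Cinf} in closed form and to verify that its maximum value equals the Brandenburg-Wyner operational capacity \eqref{eq.C}, attained by the PSD in \eqref{eq.Rx*}. This settles the Tsybakov conjecture by direct computation, thereby avoiding any fragile information-stability argument (which is, in fact, not known to hold here since memory can destroy ergodicity).

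First I would rewrite the objective. With $\bR_y(\theta)=\bH(\theta)\bR_x(\theta)\bH^+(\theta)+\bR_{\xi}(\theta)$, the Sylvester identity $|\bI+\bA\bB|=|\bI+\bB\bA|$ reduces the Pinsker integrand to $\log|\bI+\bR_x(\theta)\bH^+(\theta)\bR_{\xi}^{-1}(\theta)\bH(\theta)|$, so that \eqref{eq.Cinf} becomes
\ba
C_{inf}=\max_{\bR_x(\cdot)\in\mathcal{S}}\frac{1}{4\pi}\int_{-\pi}^{\pi}\log\!\left|\bI+\bR_x(\theta)\bH^+(\theta)\bR_{\xi}^{-1}(\theta)\bH(\theta)\right|d\theta,
\ea
a convex program (concave objective over a convex constraint set) in which the eigenvalues of $(\bH^+(\theta)\bR_{\xi}^{-1}(\theta)\bH(\theta))^{-1}$ are exactly the $\gl_i(\theta)$ appearing in \eqref{eq.C}.

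Second, I would reduce the matrix problem to a scalar water-filling. Diagonalize $\bH^+(\theta)\bR_{\xi}^{-1}(\theta)\bH(\theta)=\bU(\theta)\bLam^{-1}(\theta)\bU^+(\theta)$ with $\bLam(\theta)=\mathrm{diag}(\gl_i(\theta))$. A standard symmetry/concavity argument (or Hadamard's inequality applied to $\bI+\bLam^{-1/2}(\theta)\bU^+(\theta)\bR_x(\theta)\bU(\theta)\bLam^{-1/2}(\theta)$, which shares determinant with $\bI+\bR_x(\theta)\bH^+(\theta)\bR_{\xi}^{-1}(\theta)\bH(\theta)$) shows that for any fixed $tr\bR_x(\theta)$ the determinant is maximized when $\bR_x(\theta)$ commutes with $\bH^+(\theta)\bR_{\xi}^{-1}(\theta)\bH(\theta)$, i.e., when $\bR_x(\theta)=\bU(\theta)\,\mathrm{diag}(d_i(\theta))\,\bU^+(\theta)$. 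The problem then collapses to the joint frequency/eigenmode water-filling
\ba
\max_{d_i(\theta)\ge 0}\ \sum_i\frac{1}{4\pi}\int_{-\pi}^{\pi}\log\!\Big(1+\frac{d_i(\theta)}{\gl_i(\theta)}\Big)d\theta\ \ \mathrm{s.t.}\ \ \sum_i\frac{1}{2\pi}\int_{-\pi}^{\pi}d_i(\theta)\,d\theta\le P,
\ea
whose KKT solution is $d_i^\ast(\theta)=(\mu-\gl_i(\theta))_+$ with $\mu$ given by \eqref{eq.PC}. Substituting back yields $C_{inf}=\frac{1}{4\pi}\sum_i\int_{-\pi}^{\pi}\left(\log(\mu/\gl_i(\theta))\right)_+d\theta=C$, and rotating back to the original basis gives $\bR_x^\ast(\theta)=\bU(\theta)(\mu\bI-\bLam(\theta))_+\bU^+(\theta)=(\mu\bI-(\bH^+(\theta)\bR_{\xi}^{-1}(\theta)\bH(\theta))^{-1})_+$, which is \eqref{eq.Rx*}.

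The main obstacle I expect is justifying the pointwise-in-$\theta$ reduction inside the outer integral: one must exhibit measurable eigendecompositions $\bU(\theta),\bLam(\theta)$ and confirm that the supremum commutes with the integral. Fortunately, the assumptions \eqref{eq.ch.c1}-\eqref{eq.ch.c2} make $\bH(\theta)$ and $\bR_{\xi}(\theta)$ continuous on the compact circle and bound the $\gl_i(\theta)$ away from $0$ and $\infty$, so a standard measurable-selection/Lagrangian-duality argument applies. The case of $\bH(\theta)$ singular on a null set is handled by the continuity remark after \eqref{eq.ch.c1}, and a routine monotonicity check shows that \eqref{eq.PC} determines $\mu$ uniquely.
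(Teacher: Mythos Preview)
Your proposal is correct and follows essentially the same route as the paper: rewrite the Pinsker integrand via $|\bI+\bA\bB|=|\bI+\bB\bA|$ as $\log|\bI+\bW(\theta)\bR_x(\theta)|$ with $\bW(\theta)=\bH^+(\theta)\bR_{\xi}^{-1}(\theta)\bH(\theta)$, diagonalize $\bW(\theta)$, use Hadamard's inequality to reduce to a parallel-channel water-filling over eigenmodes and frequency, observe that equality holds when $\bR_x(\theta)$ shares the eigenvectors of $\bW(\theta)$, and then identify the resulting expression with \eqref{eq.C} via $\gl_i(\theta)=\gl_{wi}^{-1}(\theta)$. Your extra remarks on measurable selection and the uniqueness of $\mu$ are sound refinements, but they do not constitute a different strategy.
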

\begin{proof}
First, observe that
\ba
\log \frac{|\bR_y(\theta)|}{|\bR_{\xi}(\theta)|} &= \log |\bI+ \bR_{\xi}^{-1}(\theta)\bH(\theta)\bR_x(\theta)\bH^+(\theta)|\\
&= \log |\bI+ \bW(\theta)\bR_x(\theta)|\\
&= \log |\bI+ \bLam_w(\theta)\bar{\bR}_x(\theta)|\\
\label{eq.log-det.4}
&\le \sum_i \log(1+ \gl_{wi}(\theta)d_{i}(\theta))
\ea
where $\bW(\theta)=\bH^+(\theta)\bR_{\xi}^{-1}(\theta)\bH(\theta)$ and $\bW(\theta)=\bU_w(\theta)\bLam_w(\theta)\bU^+_w(\theta)$ is its eigenvalue decomposition, $\bU_w(\theta)$ is the unitary matrix of its eigenvectors and $\bLam_w(\theta)$ is a diagonal matrix of its eigenvalues, $\bar{\bR}_x(\theta)=\bU_w^+(\theta)\bR_x(\theta)\bU_w(\theta)$, $\gl_{wi}(\theta)$ and $d_{i}(\theta)$ are $i$-th eigenvalue and diagonal entry of $\bW(\theta)$ and $\bar{\bR}_x(\theta)$ respectively. 1st equality follows from the channel model; 2nd and 3rd equalities follows from $|\bI+\bA\bB|=|\bI+\bB\bA|$; the inequality follows from Hadamard inequality. Further notice that $\sum_i d_i(\theta)=tr\bar{\bR}_x(\theta)= tr\bR_x(\theta)$ and hence optimizing over $\bR_x(\theta)$, $\bar{\bR}_x(\theta)$ and $d_i(\theta)$ are all the same (satisfy the same power constraint), so that
\ba
\label{eq.log-det.5}
\max_{\bR_x(\theta) \in \mathcal{S}} &\int_{-\pi}^{\pi} \log \frac{|\bR_y(\theta)|}{|\bR_{\xi}(\theta)|}d\theta \notag\\
&\le \sum_i \max_{d_i(\theta)\ge 0} \int_{-\pi}^{\pi} \log(1+ \gl_{wi}(\theta)d_{i}(\theta))d\theta\\ \notag
&\qquad\qquad \mbox{s.t.}\ \frac{1}{2\pi}\int_{-\pi}^{\pi}\sum_i d_{i}(\theta)d\theta \le P
\ea
Since the upper bound in \eqref{eq.log-det.5} is the information rate of $n$ parallel Gaussian channels, its capacity is attained by the standard water-filling solution (see e.g., Theorem 8.5.1 in \cite{Gallager}; its slight extension applies to the parallel channel setting as well):
\ba
d_i(\theta) =\gl_i(\bR_x(\theta)) = (\mu - \gl_{wi}^{-1}(\theta))_+
\ea
where $\mu>0$ is found from the total power constraint:
\ba
\frac{1}{2\pi}\int_{-\pi}^{\pi}\sum_i (\mu - \gl_{wi}^{-1}(\theta))_+ d\theta = P
\ea
which maximizes the upper bound in \eqref{eq.log-det.5} under the input  power constraint. Further note that the equality in \eqref{eq.log-det.5} is attained when $\bR_x(\theta)$ and $\bW(\theta)$ have the same eigenvectors (this can always been done since the power constraint does not limit the eigenvectors of $\bR_x(\theta)$ but only its eigenvalues). Hence, the optimal input PSD matrix has the same eigenvectors as those of $\bW(\theta)$ and can be expressed as
\ba
\label{eq.Rx*.2}
\bR_x^*(\theta) &=(\mu\bI - \bW^{-1}(\theta))_+ \notag\\
&=  (\mu\bI - (\bH^+(\theta)\bR_{\xi}^{-1}(\theta)\bH(\theta))^{-1})_+
\ea
Finally, the information capacity is
\ba\notag
C_{inf} &= \frac{1}{4\pi} \int_{-\pi}^{\pi} \log |\bI+ \bH^+(\theta)\bR_{\xi}^{-1}(\theta)\bH(\theta)\bR_x^*(\theta)| d\theta\\ \notag
&= \frac{1}{4\pi} \sum_i \int_{-\pi}^{\pi} \log(1+\gl_{wi}(\theta)(\mu - \gl_{wi}^{-1}(\theta))_+) d\theta\\
\label{eq.Cinf.3}
&= \frac{1}{4\pi} \sum_i \int_{\theta: \mu \gl_{wi}>1} \log(\mu\gl_{wi}(\theta)) d\theta
\ea
which is exactly the same as in \eqref{eq.C}, since $\gl_i(\theta)=\gl_{wi}^{-1}(\theta)$.
\end{proof}

We remark that this result also proves (indirectly) that the above channel is information stable (under the stated assumptions), since the operational and information capacities coincide only if the channel is information-stable \cite{Dobrushin-59}\cite{Ting}.

This result can also be used when $\bH(\theta)$ is singular, i.e., $|\bH(\theta)|=0$ for some $\theta$, which corresponds to $\bW(\theta)$ being singular: since $(\cdot)_+$ operator eliminates negative eigenmodes, zero eigenvalues of $\bW(\theta)$ do not affect $\bR_x^*(\theta)$, which assigns zero input power at those frequencies (this can be seen via the standard continuity argument), as it should be. This is also clear from \eqref{eq.Cinf.3}, where the integration is over the region where $\gl_{wi}(\theta)>0$, which corresponds to $\bH(\theta)$ being non-singular at those frequencies.

With minor modifications, these results can also be extended to the case of unequal number of inputs and outputs (antennas). In particular, \eqref{eq.Rx*.2}-\eqref{eq.Cinf.3} apply directly to this case.

In the special case of $\bH(\theta)=\bI$ (i.e., parallel channels with memoryless impulse response, but where the noise can be correlated and with memory), \eqref{eq.Rx*} reduces to
\ba
\label{eq.Rx*.I}
\bR_x^*(\theta) = (\mu\bI - \bR_{\xi}(\theta))_+
\ea
which coincides with Theorem 1 in \cite{Tsybakov-06} for $n=2$ and further extends it to any $n>2$.

\section{Joint Power Constraints}

While the constraint set in \eqref{eq.S} includes only the TPC, Theorem 1 can be extended to include additional additional power constraints as well. Among the most important ones are per-antenna constraints as in \cite{Vu-11}\cite{Loyka-17}, interference power constraints (typical for multi-user systems including cognitive radio) as in \cite{Loyka-20} as well as energy harvesting  constraints. While the above constraints were formulated for frequency-flat AWGN channels, we give below their extension to frequency-selective channels (or, equivalently, channels with memory).

Per-antenna power constraints (PAC) limit the average power radiated by each antenna (due to e.g., limited power amplifiers):
\ba
\label{eq.PAC}
\frac{1}{2\pi}\int_{-\pi}^{\pi} r_{ii}(\theta) d\theta \le P_i
\ea
where $r_{ii}(\theta) \ge 0$ is $i$-th diagonal entry of $\bR_x(\theta)$, the integral represents the average power radiated by $i$-th antenna and $P_i$ is its maximum value.

The interference power constraints (IPC) take the following form:
\ba
\label{eq.IPC}
\frac{1}{2\pi}\int_{-\pi}^{\pi} tr\{\bH_k(\theta)\bR_x(\theta)\bH_k(\theta)^+\} d\theta \le P_{I,k}
\ea
where $\bH_k(\theta)$ represents the channel to $k$-th user and $P_{I,k}$ is the maximum interference power induced to that user.

The energy harvesting constraint (EHC) is opposite of the IPC:
\ba
\label{eq.EHC}
\frac{1}{2\pi}\int_{-\pi}^{\pi} tr\{\bH_m(\theta)\bR_x(\theta)\bH_m(\theta)^+\} d\theta \ge P_{E,m}
\ea
where $\bH_m(\theta)$ represents the channel to $m$-th energy-harvesting user and $P_{E,m}$ is the minimum harvested power (or energy per unit time) of that user.

The overall (joint) constraint set is
\ba
\label{eq.S.joint}
\mathcal{S}_{o} = \left\{\bR_x(\theta)\ \ \mbox{s.t.}\ \ \eqref{eq.S}, \eqref{eq.PAC}, \eqref{eq.IPC}, \eqref{eq.EHC} \right\}
\ea
where some constraints can be omitted, if necessary. We will further assume that this set is not empty, i.e., the constraints are compatible (otherwise, the capacity is zero). The following Theorem is an extension of Theorem 1 to the case of the joint constraints.

\begin{thm}
Under the joint constraints in \eqref{eq.S.joint}, the operational capacity of the channel in Sec. II is the same as its information capacity, $C= C_{inf}$, where
\ba
\label{eq.Cinf.join}
C_{inf} \triangleq \frac{1}{4\pi} \max_{\bR_x(\theta) \in \mathcal{S}_{o}} \int_{-\pi}^{\pi} \log |\bI+ \bW(\theta)\bR_x(\theta)| d\theta
\ea
\end{thm}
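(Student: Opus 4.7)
The plan is to reduce Theorem 2 to the machinery already used for Theorem 1. The information capacity $C_{inf}$ in \eqref{eq.Cinf.join} is the supremum of a concave, continuous functional over the convex, closed, TPC-bounded set $\mathcal{S}_o$ of PSD-matrix-valued spectral densities, so a maximizer $\bR_x^*(\theta)\in\mathcal{S}_o$ exists. The remaining work is then to establish the two inequalities $C\le C_{inf}$ and $C\ge C_{inf}$ separately.

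For the converse $C\le C_{inf}$, the key observation is that the channel in \eqref{eq.ch} under \eqref{eq.ch.c1}-\eqref{eq.ch.c2} is information-stable: as noted in the remark after Theorem 1, $C=C_{inf}$ under the TPC implies information-stability via \cite{Dobrushin-59}, and information-stability is a property of the channel alone, independent of the input-constraint set. Hence the operational capacity equals the supremum of the mutual information rate over admissible inputs. Since every constraint defining $\mathcal{S}_o$ depends on the input only through its covariance $\bR_x(\theta)$, a maximum-entropy argument shows that Gaussian inputs attain this supremum, and Pinsker's formula then identifies it with $C_{inf}$.

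For the direct part $C\ge C_{inf}$, I would apply Brandenburg--Wyner's direct coding theorem \cite{Brandenburg-74} to a Gaussian codebook with per-frequency covariance $\bR_x^*(\theta)$, which achieves rate $\frac{1}{4\pi}\int\log|\bI+\bW(\theta)\bR_x^*(\theta)|\,d\theta = C_{inf}$. Because all constraints in $\mathcal{S}_o$ are linear functionals of $\bR_x(\theta)$, they are satisfied in expectation by the codebook, and a concentration-plus-expurgation argument upgrades this to per-codeword feasibility with vanishing rate loss. The main obstacle is the EHC \eqref{eq.EHC}: unlike the TPC, PAC and IPC, it is a \emph{lower}-bound constraint, and standard expurgation of low-power codewords can only depress the harvested power. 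I would handle this by perturbing $\bR_x^*$ to an interior point $\bR_x^{*,\epsilon}$ that over-satisfies the EHC by a margin $\epsilon$, running the coding argument there, and letting $\epsilon\to 0$ using continuity of the objective on $\mathcal{S}_o$.
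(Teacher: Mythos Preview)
Your approach is largely sound and reaches the same conclusion, but it diverges from the paper's proof in the converse and carries one loose step there.

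For the converse, the paper simply invokes Han's general converse \cite[Theorems 3.5.2, 3.6.1]{Han-03} (or equivalently the Fano-based converse of \cite[Sec.~3.1]{Brandenburg-74}), which gives $C\le C_{inf}$ for \emph{any} constraint set, since Fano's inequality does not depend on how the input is constrained---only the subsequent maximization of mutual information does. You instead bootstrap information stability from Theorem~1. The assertion that ``information-stability is a property of the channel alone, independent of the input-constraint set'' is not correct as a general principle: information stability is defined relative to a (sequence of) input distribution(s), and changing the constraint set changes the relevant optimal input. It \emph{is} true for the channel at hand, because Pinsker's theory gives information stability for every stationary Gaussian input under \eqref{eq.ch.c1}--\eqref{eq.ch.c2}, not just the TPC-optimal one; but you should say this explicitly rather than assert it as a meta-principle. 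Note also that your converse is overpowered: $C\le C_{inf}$ follows from Fano alone and needs no information-stability argument.

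For achievability, both you and the paper rely on the Brandenburg--Wyner codebook construction; you are more explicit about how the additional constraints are enforced (concentration plus expurgation, and interior-point perturbation for the EHC), whereas the paper simply says to ``amend'' the construction with the extra constraints and notes that the remaining steps are unaffected. Your treatment of the EHC is a genuine addition over the paper's sketch, which does not address how a lower-bound constraint survives expurgation.
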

\begin{proof}
To prove the converse, note that it follows from \cite[Theorem 3.5.2 and 3.6.1]{Han-03} that $C \le C_{inf}$. Alternatively, one can use the converse in \cite[Sec. 3.1]{Brandenburg-74} and observe that it also holds under the joint constraints since its key ingredient, Fano's inequality, is not affected by the constraints but only the maximization of mutual information is. Likewise, to establish achievability, one can use the codebook construction in \cite[eq. (3) and Sec. 3.2]{Brandenburg-74} and amend it with the PAC, IPC and EHC, in addition to the TPC; all other steps of the proof remain unaffected (since they do not depend on the constraints).
\end{proof}

While the optimal input covariance matrix $\bR_x^*(\theta)$ is in a closed-form in Theorem 1, a closed-form solution to the maximization in \eqref{eq.Cinf.join} under the joint constraints is not known in the general case, even for the memoryless channel.

However, it can be further simplified to a more explicit form in some special cases. Let us consider the PAC alone, as  in \eqref{eq.PAC}, and  extend \cite{Vu-11} to the case of a MISO channel with memory, $\bH(\theta)=\bh(\theta)^+$, where the noise is i.i.d., $\bR_{\xi}=\sigma_0^2\bI$, but the channel has memory via its impulse response $\bh(t)$ (e.g., due to multipath propagation). Using the same arguments as in \cite{Vu-11} but applied to the frequency-domain channel $\bh(\theta)$, it follows that the optimal covariance in \eqref{eq.Cinf.join} is of rank-one, $\bR_x^*(\theta)=\bw(\theta)\bw(\theta)^+$, where
\ba
w_i(\theta)=\sqrt{r_{ii}(\theta)}e^{j\phi_i(\theta)},\ \phi_i(\theta)=\arg\{h_i(\theta)\}
\ea
and the maximization in \eqref{eq.Cinf.join} reduces to the optimal power allocation in the frequency domain:
\ba
\label{eq.Cinf.PAC}
C &=  \max_{\{r_{ii}(\theta)\}}\frac{1}{4\pi} \int_{-\pi}^{\pi} \log \big(1+ \sigma_0^{-2}\big|\sum_i |h_i(\theta)|\sqrt{r_{ii}(\theta)}\big|^2\big) d\theta \notag\\
&\qquad \mbox{s.t.}\ \frac{1}{2\pi}  \int_{-\pi}^{\pi} r_{ii}(\theta) d\theta \le P_i,\ r_{ii}(\theta)\ge 0
\ea
To the best of our knowledge, no closed-form solution is known for this problem (note that the standard water-filling solution, derived under the TPC, does not apply here due to the PAC).

From an engineering perspective, Theorems 1 and 2 establish the information-theoretic optimality of OFDM-type transmission systems operating over channels with memory as in \eqref{eq.ch}, since such systems essentially "implement" the capacity expressions in \eqref{eq.Cinf}, \eqref{eq.Cinf.join}, \eqref{eq.Cinf.PAC} by replacing the integrals with Riemann sums, e.g., $C_{inf}$ in \eqref{eq.Cinf.join} is approximated by
\ba
\label{eq.Cinf.join.A}
C_{inf} \approx \frac{1}{4\pi} \max_{\bR_x(\theta_i)} \sum_{i} \log |\bI+ \bW(\theta_i)\bR_x(\theta_i)| \Delta\theta_i
\ea
where $\theta_i$ represents $i$-th subcarrier,  $\Delta\theta_i$ represents its bandwidth and the sum is over all subcarriers. While such systems are widely studied in the literature, see e.g., \cite{Bolcskei-02}\cite{Hoang-20}, their information rates are often evaluated via respective mutual information and it remains unclear whether (i) those rates are indeed achievable (this is not trivial since channels with memory are not necessarily information-stable and hence their mutual information may have no operational meaning \cite{Dobrushin-59}-\cite{Ting}) and whether (ii) they can be further improved.  The above Theorems provide the affirmative answer to (i) and the negative answer to (ii), both for the channel in \eqref{eq.ch}.

\section{Concusion}

A Gaussian MIMO channel with memory has been considered and its operational channel capacity hes been obtained in a closed form under the total power constraint, and was shown to be equal to its information capacity, thus proving the earlier conjecture in \cite{Tsybakov-06}. This extends the seminal result in \cite{Tsybakov-65}\cite{Telatar-95} to Gaussian MIMO channels  with memory and implies an information-theoretic optimality of OFDM-type transmission systems for such channels. This result is further extended to the case of joint power constraints, including per-antenna constraints (either alone or in combination with the total power constraints) as well as interference and energy-harvesting constraints. For memoryless channels, the reported results reduce to the well-known capacity expressions, as it should be.


\end{document}